\crefname{hypothesis}{Hypothesis}{Hypotheses}
\title{Survival and invasion dynamics in cell populations: an analytical framework for threshold behaviour in nonlinear age-structured models}
\author{St\'ephanie M. C. Abo\thanks{Mathematical Institute, University of Oxford, UK 
  (\email{stephanie.abo@maths.ox.ac.uk}).}
\and Ruth E. Baker\footnotemark[1]}
\newcommand{\D}{\mathrm{d}}
\newcommand{\FKPP}{Fisher--KPP}
\setlist[itemize,1]{label={\tiny\raisebox{0.8ex}{\textbullet}}, itemsep=-2pt, topsep=4pt}
\setlist[enumerate,1]{itemsep=-2pt, topsep=4pt}
\definecolor{stephaniecolor}{RGB}{144, 12, 63}
\begin{document}

\maketitle

\begin{abstract}
Cell populations invade through a combination of proliferation and motility. Proliferation depends on the internal timing of cell division: how long cells take to complete the cell cycle. This timing varies substantially within (and across) cell types, creating age structure where cells at different times since their last division have different propensities to divide. Classical mathematical models of cell spreading treat division as memoryless and predict exponential cell-cycle-time distributions. Lineage tracing, by contrast, reveals peaked, gamma-like distributions that indicate a maturation delay leading to a fertility window. This gap motivates a modelling framework that incorporates age-dependent cell division rates while retaining analytical tractability. We address this through a moment-hierarchy framework that tracks time since cell division, with age resetting to zero at division. The framework yields explicit formulae for steady-state age distributions, cell-cycle-time distributions, and invasion speeds. For age-independent rates, we recover classical \FKPP{}. Three fundamental principles emerge. First, age structure systematically reduces a population's carrying capacity and narrows the viable parameter range for positive steady states. Second, classical linear theory overestimates invasion speeds; the true minimal speed is slower when division is age-dependent. Third, the parameter condition for population survival is identical to the condition for a positive invasion speed.
\end{abstract}

\begin{keywords}
Cell cycle dynamics, division timing, structured populations, \FKPP{} equation, travelling waves, integro--differential equations, survival conditions.
\end{keywords}


\section{Introduction}\label{sec:intro}
The coordinated behaviour of motile and proliferative cells is fundamental to embryonic development, tissue regeneration, wound healing, and tumour invasion. In such processes, spatial expansion occurs through \emph{collective cell invasion}, where cells move as cohesive groups with mechanical and biochemical coupling \cite{aoki_2017}. These dynamics depend on both external cues (adhesion, confinement) and internal cell-cycle state \cite{heinrich_2020, donker_2022, falco_2023}. The cell cycle imposes a necessary maturation delay before division can occur, acting as a key biological constraint. The progression time through the cell cycle, measured as the intermitotic time (IMT), exhibits marked heterogeneity in experimental data \cite{smith_1973, sakaue_2008}. High-resolution lineage tracing reveals IMT distributions that are typically gamma-like, with defined peaks and right skew, rather than exponential as assumed in memoryless birth models \cite{yates_2017} (Figure~\ref{fig:intro-diagram}(a)). This discrepancy indicates that models assuming constant division rates across the cell cycle do not account for essential biological regulation.

\begin{figure}[htb]
  \centering
  \includegraphics[width=\textwidth]{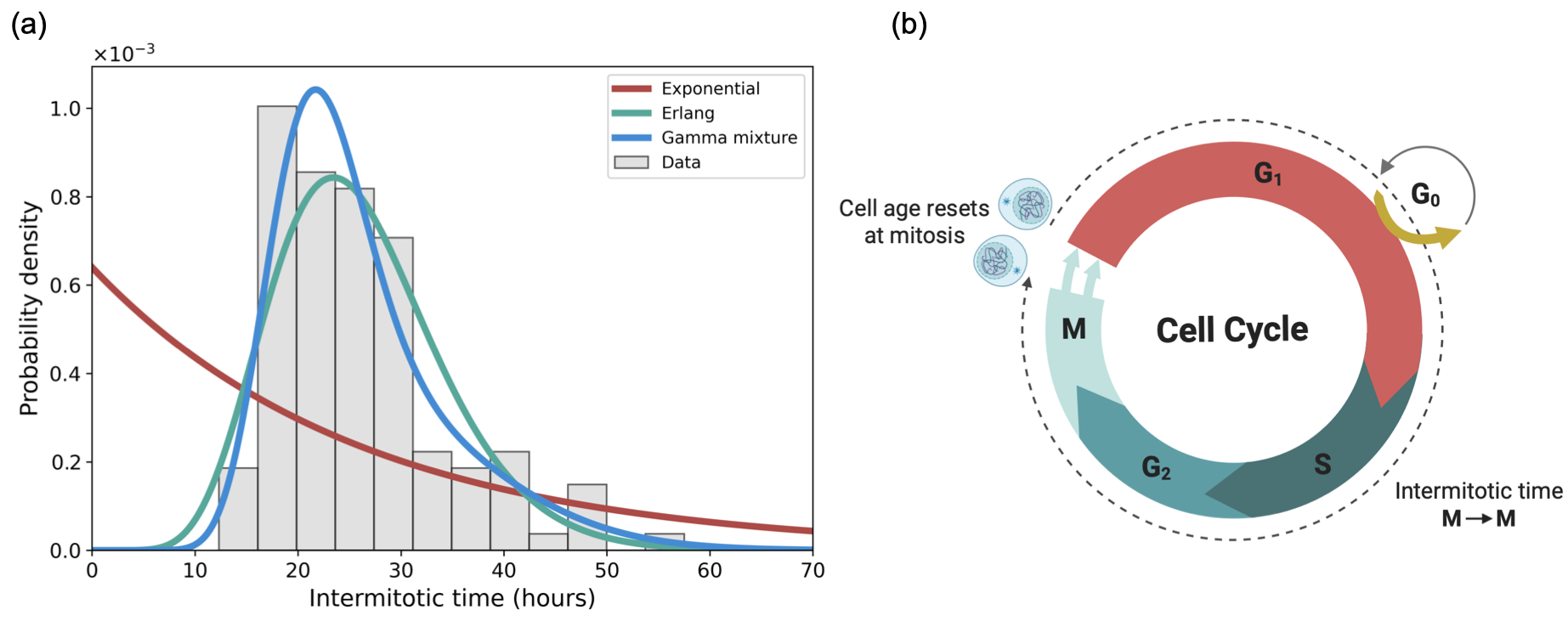}
  \caption{(a) Distribution of intermitotic times (IMTs) from lineage-tracing data (grey bars) with fitted exponential (red), Erlang (green), and gamma mixture (blue) models. (b) Schematic of the cell cycle. Cells progress through four main phases: Gap 1 (G1, growth and preparation for DNA synthesis), Synthesis (S, DNA replication), Gap 2 (G2, preparation for mitosis), and Mitosis (M, cell division; here mitosis and cell division are used interchangeably). In our model, cell age (defined as time since last division) resets to zero upon mitosis, providing a coarse-grained representation of cell-cycle progression. The intermitotic time (IMT) is measured from one mitosis to the next (M\(\to\)M). Experimental IMT data reproduced with permission, courtesy of Richard L. Mort (Lancaster University) and Matthew J. Ford (University of Cambridge), provided by Christian A. Yates (University of Bath). Cell-cycle stages adapted from \cite{jiang_2025} (BioRender).}
  \label{fig:intro-diagram}
\end{figure}

Non-exponential division timing has motivated structured modelling approaches that account for how internal states constrain and delay cell division (Figure~\ref{fig:intro-diagram}(b)). These approaches introduce continuous variables representing internal timing \cite{jones_2018, kynaston_2022, ocal_2025, kang_2022}. \emph{Cell age} can denote chronological age, phase duration, or generational time; age-structured partial differential equations (PDEs) provide a general formalism accommodating this variability. Here, we focus on cell age as time since last division---a biologically grounded assumption aligning with experimental measurement of cell-cycle times. This simplification aggregates sub-stages while providing a meaningful foundation for future extensions incorporating finer-grained structure.

Classical mathematical models of population spread, such as the \FKPP{} (Fisher--Kolmogorov--Petrovsky--Piskunov) equation, assume cells divide at a constant rate independent of time since division. These models do not account for the finite time required to complete the cell cycle, which affects division-time distributions and the capacity of populations to persist or invade. Age-structured models address this gap by tracking the joint evolution of density and internal time. The foundational works of McKendrick \cite{dietz_1997} and von~Foerster \cite{vonfoerster_1959} formalised transport equations where individuals progress deterministically through age. In the simplest linear setting, populations grow exponentially and approach stable age distributions. Gurtin and MacCamy \cite{gurtin_1979, gurtin_1981} extended this to include density dependence, spatial diffusion, and biologically motivated division kernels where division rates peak at intermediate ages. Their formulation combined division and death within a single loss term, with age dependence appearing only in the boundary condition. This simplified the mathematics but limited the capacity to distinguish how ageing, division, and death separately influence dynamics.

Spatial theory advanced with the work of Al-Omari and Gourley \cite{alomari_2002}, who established the existence of travelling waves in age-structured reaction-diffusion models. Gourley \cite{gourley_2005} proved linear stability of travelling fronts, and Li, Mei, and Wong \cite{li_2008} extended this to nonlinear stability, though initially restricted to small maturation delays. Mei and Wong \cite{mei_2009} later proved nonlinear stability for arbitrarily large delays. In applied contexts, structured PDEs have been used to model cell-cycle regulation in proliferating tissues \cite{nixson_2025} and to connect multi-stage stochastic frameworks with continuous age-structured formalisms \cite{kynaston_2022}. More recently, Pang and Pardoux \cite{pang_2023} established formal connections between stochastic individual-based models, renewal equations, and age-structured PDEs. Despite these advances, most nonlinear age-structured PDEs remain analytically intractable. Explicit characterisations of steady states, division-time distributions, and invasion speeds remain rare, limiting the ability to derive interpretable predictions from such models.

We develop an analytical framework for collective cell invasion that links single-cell division timing to population persistence and invasion. Models are classified by their rate functions---constant, exponentially decaying, or gamma-like. Integration with respect to age yields a set of integro--differential equations where age structure enters through weighted integrals. This finite integro--differential formulation can be analysed directly, without closure approximations. Division and death are treated as separate processes, each dependent on cell age and local density. We treat non-spatial and spatial dynamics. In the non-spatial setting, we obtain explicit formulae for steady-state age distributions, cell-cycle-time distributions, and the parameter conditions for population persistence. In the spatial setting, we derive expressions for the minimal wave speed and recover the classical \FKPP{} limit when division and death are age-independent. To our knowledge, this is the first deterministic analysis within this class of nonlinear age-structured PDEs to provide explicit cell-cycle-time distributions together with survival conditions and invasion criteria. It addresses a gap previously approached only through stochastic or discrete-time models \cite{yates_2017, belluccini_2022}. We show that the shape of the division kernel controls both population persistence and invasion speed. Biologically, the results account for the gamma-shaped division-time distributions observed in lineage-tracing data and demonstrate how longer division times slow invasion. The framework links measurable quantities to the underlying division and death mechanisms. 

The paper is structured as follows. Section~2 introduces the age-structured model and derives the moment-hierarchy reduction. Section~3 analyses non-spatial dynamics and establishes survival thresholds that depend systematically on division kernel shape. Sections~4--5 characterise steady-state age distributions and cell-cycle-time distributions. Section~6 establishes minimal wave speeds for spatial models and shows that classical linear theory overestimates invasion rates when division is age-dependent. Section~7 discusses biological implications and experimental validation. Our approach builds on the work of Gurtin and MacCamy \cite{gurtin_1979, gurtin_1981} but diverges by separating division and death explicitly, retaining age-dependence in both loss terms, and deriving analytical results without closure assumptions.

\section{Model framework}\label{sec:framework}
We model a motile and proliferative cell population, where cell age tracks time since last division. Let \(u(a,x,t)\) represent the density of cells of age \(a \geq 0\) at position \(x\in\Omega\subseteq\mathbb{R}^d\) and time \(t\ge 0\), with total population
\begin{equation}\label{eq:totalpop}
P(x,t) = \int_0^\infty u(a,x,t)\,\D a.
\end{equation}
Cells age at unit rate, move diffusively with coefficient \(\kappa\ge 0\), and are removed by death at rate \(\mu(a,P)\ge 0\) or division at rate \(\beta(a,P)\ge 0\). Upon division, two daughter cells enter at age zero, giving the transport-renewal system
\begin{align}
\partial_t u + \partial_a u &= \kappa\,\Delta u - \bigl[\mu(a,P(x,t)) + \beta(a,P(x,t))\bigr]\,u, \label{eq:transport} \\
u(0,x,t) &= 2 \int_0^\infty \beta(a, P(x,t))\,u(a,x,t)\,\D a. \label{eq:renewal}
\end{align}
The boundary condition \eqref{eq:renewal} represents the influx of newborn cells. Initial data are \(u(a,x,0) = u_0(a,x) \ge 0\). For bounded \(\Omega\), we impose no-flux boundary conditions (confined tissue). For \(\Omega = \mathbb{R}^d\), one may assume decay as \(|x|\to\infty\) (unbounded spread).

The model captures two mechanisms. First, division and death rates depend on cell age and local density \(P(x,t)\). This density dependence captures contact inhibition without imposing an explicit carrying capacity. Second, division and death are treated as distinct processes. This separation enables direct analysis of how ageing and division timing independently influence population dynamics.

We analyse five model cases (Table~\ref{tab:model-cases}): constant rates (age-homogeneous), exponential decay (early division), and gamma-type profiles (maturation delay). These forms are chosen for analytical tractability and biological relevance. Each case is studied in both non-spatial (\(\kappa=0\)) and spatial (\(\kappa>0\)) settings.
\begin{table}[htb]
\centering
\caption{Model cases studied in both non-spatial and spatial settings}
\label{tab:model-cases}
\renewcommand{\arraystretch}{1.2}
\begin{tabular*}{0.65\textwidth}{@{\extracolsep{\fill}}cll@{}}
\toprule
Case & \(\mu(a,P)\) & \(\beta(a,P)\)\\
\midrule
1 & \(\mu\) & \(\beta(1-P)\) \\
2 & \(\mu\) & \(\beta e^{-\alpha a}(1-P)\)\\
3 & \(\mu\) & \(\beta a e^{-\alpha a}(1-P)\) \\
4 & \(\mu P\) & \(\beta a e^{-\alpha a}(1-P)\) \\
5 & \((\mu - \gamma a e^{-\alpha a})P\) & \(\beta a e^{-\alpha a}(1-P)\)\\
\bottomrule
\end{tabular*}
\end{table}

\subsection{Integro--differential reduction}\label{sec:reduction}
We now derive an integro--differential reduction of the age-structured system \eqref{eq:transport}--\eqref{eq:renewal}. 

Integrating Equation~\eqref{eq:transport} with respect to age, and assuming \(u(a,x,t)\to 0\) as \(a\to\infty\), we obtain the population balance
\begin{equation}\label{eq:P-evol}
\partial_t P = \kappa\,\Delta P(x,t) + B(x,t) - \int_0^\infty \bigl[\mu(a,P(x,t))+\beta(a,P(x,t))\bigr]u(a,x,t)\,\D a,
\end{equation}
where \(B(x,t)=u(0,x,t)\) is given by the renewal condition \eqref{eq:renewal}. When division and death rates depend only on the total population, the system closes in terms of \(P(x,t)\) alone; see Lemma~\ref{lem:class1}. This corresponds to logistic-type dynamics (cf. Case 1). 
\begin{lemma}\label{lem:class1}
    Suppose the division and death rates depend only on total density, i.e. \(\mu(a,P)=\mu(P)\) and \(\beta(a,P)=\beta(P)\), with \(\mu,\beta\ge 0\) and continuous. Under the renewal condition \eqref{eq:renewal}, the total population \(P(x,t)=\int_0^\infty u(a,x,t)\,\D a\) satisfies the reaction-diffusion equation
    \begin{equation}\label{eq:class1-closure}
        \partial_t P=\kappa\,\Delta P+\bigl[\beta(P)-\mu(P)\bigr]\,P.
    \end{equation}
    More generally, for affine feedback \(\beta(P)=\beta_0-\beta_1 P\) and \(\mu(P)=\mu_0+\mu_1 P\), 
    \begin{equation}\label{eq:affine-logistic}
    \partial_t P=\kappa\,\Delta P+\bigl[(\beta_0-\mu_0)-(\beta_1+\mu_1)P\bigr]\,P,
    \end{equation}
    where the linear growth rate is \(r=\beta_0-\mu_0\) and the carrying capacity is \(K=(\beta_0-\mu_0)/(\beta_1+\mu_1)\). Hence \(K>0\) requires \(\beta_0>\mu_0\) and \(\beta_1+\mu_1>0\).
\end{lemma}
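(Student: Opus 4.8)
The plan is to start from the population balance equation~\eqref{eq:P-evol}, which already follows from integrating the transport equation~\eqref{eq:transport} over age under the decay hypothesis $u(a,x,t)\to 0$ as $a\to\infty$. With that identity in hand, the only work remaining is to evaluate the birth term $B(x,t)$ and the integral loss term under the age-independence assumption $\mu(a,P)=\mu(P)$, $\beta(a,P)=\beta(P)$, and then to specialise to affine feedback.

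First I would exploit the fact that, with the rates constant in $a$, both $\mu(P)$ and $\beta(P)$ factor out of the age integrals. For the loss term this gives $\int_0^\infty[\mu(P)+\beta(P)]\,u\,\D a=[\mu(P)+\beta(P)]\,P$, using the definition~\eqref{eq:totalpop}. For the birth term, substituting the renewal condition~\eqref{eq:renewal} yields $B(x,t)=u(0,x,t)=2\beta(P)\int_0^\infty u\,\D a=2\beta(P)\,P$. Continuity of $\mu,\beta$ is used only to ensure these resulting reaction terms are well defined.

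The key step—really the only algebraic content—is the cancellation when these expressions are combined in~\eqref{eq:P-evol}: division contributes $+2\beta(P)\,P$ through newborns but $-\beta(P)\,P$ through the loss of dividing mothers, leaving a net $+\beta(P)\,P$, while death contributes $-\mu(P)\,P$. This produces $\partial_t P=\kappa\,\Delta P+[\beta(P)-\mu(P)]\,P$, i.e.~\eqref{eq:class1-closure}. The factor of two in the renewal condition is precisely what converts division from a net loss into a net gain; this sign bookkeeping is the one point where care is needed, though I do not expect any genuine obstacle, since the argument is pure substitution once~\eqref{eq:P-evol} is available.

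Finally, for the affine case I would substitute $\beta(P)=\beta_0-\beta_1 P$ and $\mu(P)=\mu_0+\mu_1 P$ directly into~\eqref{eq:class1-closure}, collect terms to obtain the growth coefficient $(\beta_0-\mu_0)-(\beta_1+\mu_1)P$ as in~\eqref{eq:affine-logistic}, and match against the logistic normal form $rP(1-P/K)$ to read off $r=\beta_0-\mu_0$ and $K=(\beta_0-\mu_0)/(\beta_1+\mu_1)$. The stated sign conditions for $K>0$ then follow immediately: contact inhibition makes $\beta_1,\mu_1\ge 0$, so the denominator is positive exactly when $\beta_1+\mu_1>0$, which in turn forces $\beta_0>\mu_0$ for a positive carrying capacity. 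The whole proof is therefore elementary, with the cancellation in the $\beta$ terms being the only substantive observation.
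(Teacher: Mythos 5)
Your proposal is correct and follows essentially the same route as the paper's proof: integrate the transport equation over age, use age-independence to factor $\mu(P)+\beta(P)$ out of the loss integral, substitute the renewal condition to get $B=2\beta(P)P$, and observe the $2\beta P-\beta P=\beta P$ cancellation, with the affine case being direct substitution. No gaps.
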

\begin{proof}
    Integrate \eqref{eq:transport} with respect to \(a\) with \(u(\infty,x,t)=0\) to give
    \begin{equation}\label{eq:class1-proof}
    \partial_t P - u(0,x,t)=\kappa\,\Delta P-\int_0^\infty\bigl[\mu(P)+\beta(P)\bigr]\,u\,\D a.
    \end{equation}
    Since \(\mu,\beta\) depend only on \(P\), \(\int_0^\infty\bigl[\mu(P)+\beta(P)\bigr]u\,\D a=\bigl[\mu(P)+\beta(P)\bigr]P\). From \eqref{eq:renewal}, \(u(0,x,t)=2\beta(P)P\), and \(\int_0^\infty u\,\D a=P\). Substitution yields \eqref{eq:class1-closure}.
\end{proof}

For age-dependent rates, we introduce age-weighted moments that mirror the division kernel structures. For any fixed decay parameter \(\alpha > 0\), define
\begin{equation}\label{eq:CD-defs}
C(x,t;\alpha) = \int_0^\infty e^{-\alpha a} u(a,x,t)\,\D a, \qquad
D(x,t;\alpha) = \int_0^\infty a e^{-\alpha a} u(a,x,t)\,\D a.
\end{equation}

Multiplying Equation~\eqref{eq:transport} by \(e^{-\alpha a}\) and integrating by parts with respect to \(a\) gives
\begin{equation}\label{eq:C-evol}
\begin{aligned}
    \partial_t C &= \kappa \Delta C(x,t; \alpha) + B(x,t) - \alpha C(x,t;\alpha) \\
    &\quad\quad  - \int_0^\infty \left[\mu(a,P(x,t)) + \beta(a,P(x,t))\right] e^{-\alpha a} u(a,x,t)\,\D a.
\end{aligned}
\end{equation}

Similarly, multiplying Equation~\eqref{eq:transport}  by \(a e^{-\alpha a}\) and integrating gives
\begin{equation}\label{eq:D-evol}
\begin{aligned}
\partial_t D &= \kappa \Delta D(x,t; \alpha) + C(x,t;\alpha) - \alpha D(x,t;\alpha) \\
&\quad\quad - \int_0^\infty \left[\mu(a,P(x,t)) + \beta(a,P(x,t))\right] a e^{-\alpha a} u(a,x,t)\,\D a.
\end{aligned}
\end{equation}

These moments also appear directly in the boundary renewal term, which becomes case-dependent:
\begin{equation}\label{eq:B-cases}
B(x,t) =
\begin{cases}
2\beta(1 - P(x,t))\,P(x,t), & \text{Case 1}, \\
2\beta(1 - P(x,t))\,C(x,t;\alpha), & \text{Case 2}, \\
2\beta(1 - P(x,t))\,D(x,t;\alpha), & \text{Cases 3--5}.
\end{cases}
\end{equation}

To express the integral loss terms in Equation~\eqref{eq:C-evol} and Equation~\eqref{eq:D-evol}, i.e., \\ \(\int_0^\infty [\mu(a,P(x,t))+\beta(a,P(x,t))]\,e^{-\alpha a}u(a,x,t)\,\D a\) and
\(\int_0^\infty [\mu(a,P(x,t))+\beta(a,P(x,t))]\,a e^{-\alpha a}u(a,x,t)\,\D a\), we define higher-order moments. For any constant decay parameter \( \alpha > 0 \), define
\begin{equation}\label{eq:I-defs}
\begin{aligned}
I_0(t,x;\alpha) &= \int_0^\infty e^{-2\alpha a} u(a,x,t)\,\D a, &\quad
I_1(t,x;\alpha) &= \int_0^\infty a e^{-2\alpha a} u(a,x,t)\,\D a, \\
I_2(t,x;\alpha) &= \int_0^\infty a^2 e^{-2\alpha a} u(a,x,t)\,\D a.
\end{aligned}
\end{equation}
The evolution of \((P, C, D\)) involves these higher moments, so the system does not close. We use positivity constraints and integral inequalities to bound \(I_k\) and derive population thresholds. Equations \eqref{eq:P-evol}, \eqref{eq:C-evol}, and \eqref{eq:D-evol}, together with definitions \eqref{eq:I-defs} and case-specific \(B(x,t)\) in Equation~\eqref{eq:B-cases}, constitute the reduced integro--differential system. 

\subsection{Population persistence: survival conditions and bounds on population size} \label{sec:threshold}
We assume a well-mixed system and determine when populations sustain positive steady states and derive upper bounds on equilibrium size. 

The well-mixed system (\( \kappa = 0 \)) reduces to a transport-renewal model,
\begin{equation}\label{eq:transport-nodiff}
\begin{aligned}
\pdv{u}{t} + \pdv{u}{a} &= -\left[\mu(a,P(t)) + \beta(a,P(t))\right] u(a,t), \\
u(0,t) &= B(t) = 2 \int_0^\infty \beta(a,P(t))\,u(a,t)\,\D a, \\
P(t) &= \int_0^\infty u(a,t)\,\D a.
\end{aligned}
\end{equation}
If \(P(0) \leq 1\) then the total population \( P(t) \in [0,1], \forall t \) by construction. This system admits an explicit solution via characteristics,
\begin{equation}\label{eq:char-sol}
u(a,t) =
\begin{cases}
u_0(a - t)\,S(a - t,a,t;P), & a > t, \\
B(t - a)\,S(0,a,t - a;P), & 0 \le a \le t,
\end{cases}
\end{equation}
where the survival probability from age \( a_0 \) to \( a \) (i.e., the probability that a cell neither dies nor divides during this interval) is given by
\begin{equation}\label{eq:survival}
S(a_0,a,t;P) = \exp\left(-\int_{a_0}^a \left[\mu(a', P(a' - a_0 + t)) + \beta(a', P(a' - a_0 + t))\right]\,\D a'\right).
\end{equation}

In the long-time limit, we assume convergence to a steady state,
\begin{equation}
u(a,t) \to F(a), \qquad P(t) \to \bar{P},
\end{equation}
where \( \bar{P} \) is the total population at equilibrium. The steady-state distribution satisfies
\begin{equation}\label{eq:steady-ode}
F'(a) = -\left[\mu(a,\bar{P}) + \beta(a,\bar{P})\right] F(a),
\end{equation}
with \( F(0) = 2 \int_0^\infty \beta(a,\bar{P})\,F(a)\,\D a\) and \(\bar{P} = \int_0^\infty F(a)\,\D a.\) We define the steady-state survival function
\begin{equation}\label{eq:S-def}
S(a,\bar{P}) = \exp\left(-\int_0^a \left[\mu(a',\bar{P}) + \beta(a',\bar{P})\right]\,\D a'\right),
\end{equation}
so that the age distribution can be expressed explicitly as
\begin{equation}\label{eq:F-form}
F(a) = F(0)\,S(a,\bar{P}).
\end{equation}

Substituting Equation~\eqref{eq:F-form} into the boundary condition \eqref{eq:steady-ode} yields the integral constraint
\begin{equation}\label{eq:Pbar-condition}
1 = 2 \int_0^\infty \beta(a,\bar{P})\,S(a,\bar{P})\,\D a.
\end{equation}
This equation determines the possible steady states \(\bar{P}\). A positive solution \(\bar{P}>0\) indicates a viable, persistent population. Our moment reduction approach leads to explicit upper bounds \(P_c\), derived case by case below, such that any steady state satisfies \(\bar{P} \le P_c\). These bounds hold under the assumptions that \(\beta(a,P)\) is non-increasing in \(P\), \(\mu(a,P)\) is non-decreasing in \(P\), and the relevant moments \(C, D, I_k\) are finite.

\medskip
\paragraph{Case 1: Age-homogeneous dynamics}
When division and death rates are independent of age, all cells share the same survival probability regardless of time since division. We assume that 
\[
\mu(a,P)=\mu,\qquad \beta(a,P)=\beta(1-P),
\]
with constants \( \mu > 0 \), \( \beta > 0 \), where the term \(\beta(1-P)\) models contact inhibition: division potential decreases with density. Constant \(\mu\) implies death rate is the same at all densities.

The governing Equation~\eqref{eq:P-evol} reduces to logistic growth (Lemma~\ref{lem:class1}):
\[
P'(t) = (\beta - \mu)\,P(t)\left(1 - \frac{P(t)}{K}\right), \quad K = \frac{\beta - \mu}{\beta}.
\]
Survival requires \(\beta>\mu\); otherwise \(P\equiv0\) is the only steady state. When \(\beta>\mu\), the unique positive equilibrium is
\begin{equation}\label{eq:Pc1}
P_c = 1 - \frac{\mu}{\beta},
\end{equation}
representing the classical logistic balance between proliferation and mortality.

\medskip
\paragraph{Case 2: Early-division bias}
When cells have short cell cycle times (e.g., embryonic stem cells or intestinal epithelial cells), division events occur predominantly shortly after the previous division. We therefore assume that
\begin{equation}
\mu(a, P) = \mu, \qquad \beta(a, P) = \beta e^{-\alpha a}(1 - P),
\end{equation}
with constants \( \mu, \beta, \alpha > 0 \). The parameter \(\alpha\) controls the decline in division rate and favours division at younger ages. The contact inhibition factor \((1-P)\) limits division at high density. Inserting these rates into the main PDE, Equation \eqref{eq:transport}, and using the model reduction outlined in Section \ref{sec:reduction} yields a reduced moment system in the variables \((P,C,I_0)\), 
\begin{equation}
\begin{aligned}
    P'(t) &= -\mu P(t) + \beta(1 - P(t)) C(t), \\
    C'(t) &= -(\mu + \alpha) C(t) + \beta(1 - P(t))\left(2C(t) - I_0(t)\right).
\end{aligned}
\end{equation}
At steady state, algebraic relations between \(\bar{P}\), \(\bar{C}\), and \(I_0\) with the constraint \( \bar{I}_0 \ge 0 \) yields the survival condition \(2\beta > \mu + \alpha\). Any positive steady state satisfies
\begin{equation}\label{eq:Pc2}
P_c = 1 - \frac{\mu + \alpha}{2\beta}.
\end{equation}
Compared to Case 1, the survival condition reflects the combined effects of mortality (\(\mu\)) and age-related decline in division rates (\(\alpha\)).

\medskip
\paragraph{Case 3: Maturation delay}
When cells require time to pass cell-cycle checkpoints (G1/S/G2/M) before dividing, division potential is low immediately after division and peaks at intermediate ages. We therefore assume that
\[
\mu(a,P)=\mu,\qquad \beta(a,P)=\beta a e^{-\alpha a}(1-P),
\]
with constants \(\mu, \beta, \alpha > 0 \). The division rate follows a gamma-type profile, which peaks at age \(a=1/\alpha\). Inserting these rates into Equation \eqref{eq:transport} and using the reduction framework (Section~\ref{sec:reduction}) yields a coupled moment system in the variables \((P,C,D,I_1,I_2)\),
\begin{equation}
    \begin{aligned}
    P'(t) &= -\mu P(t) + \beta(1 - P(t))\,D(t), \\
    C'(t) &= -(\mu + \alpha)\,C(t) + \beta(1 - P(t))\left(2D(t) - I_1(t)\right), \\
    D'(t) &= -(\mu + \alpha)\,D(t) + C(t) - \beta(1 - P(t))\,I_2(t).
    \end{aligned}
\end{equation}
Solving this system at steady state, and imposing non-negativity of \((I_1, I_2)\) yields the survival condition \(2\beta>(\mu+\alpha)^2\). Any positive steady state satisfies
\begin{equation}\label{eq:Pc3}
P_c=1-\frac{(\mu+\alpha)^2}{2\beta}.
\end{equation}
The quadratic penalty reflects a compounded cost: cells must survive both background mortality (\(\mu\)) and the period of low division probability before reaching peak fertility at age \(1/\alpha\).

\medskip
\paragraph{Cases 4 and 5: Maturation delay with density-dependent death rate}
When cell death is driven by crowding rather than intrinsic processes, death rate scales with population density. We combine density-dependent death with the gamma-type division kernel from Case 3. In Case 4, we assume \(\mu(a,P)=\mu P\). Using the reduction framework (Section~\ref{sec:reduction}) yields coupled moment equations in \((P,C,D,I_1,I_2)\) analogous to those in Case 3. Again solving at steady state and assuming non-negativity of the higher moments yields a survival condition \(2\beta > \alpha^2\), and the population bound becomes
\begin{equation}\label{eq:Pc4}
P_c=\frac{-(\alpha\mu+\beta)+\sqrt{(\alpha\mu+\beta)^2-\mu^2(\alpha^2-2\beta)}}{\mu^2}.
\end{equation}
Case 5 extends this by incorporating age-dependent mortality. We assume \(\mu(a,P)=(\gamma - \mu a e^{-\alpha a})P\), which captures elevated death rates at birth, a decline during maturation, and a subsequent rise with senescence. Despite this added complexity, the survival condition remains \(2\beta>\alpha^2\), and the steady-state population is bounded above,
\begin{equation}
    \bar{P}\leq\min\left(P_c,\,\beta/(\gamma+\beta)\right),
\end{equation}
where \(P_c\) is given by Equation~\eqref{eq:Pc4}. The second constraint arises from moment sign requirements; for moderate \(\gamma\), the first term is the relevant upper bound (see Supplementary Materials, Section~3.3).

The critical shift from Cases 1--3 is that the \emph{survival condition} now depends only on the division kernel parameters \((\beta, \alpha)\), not on the death rate \(\mu\). At low density (\(P \to 0\)), density-dependent death terms vanish, so persistence is determined solely by the low-density division potential \(\beta(a,0)\). However, the equilibrium population level \(P_c\) does depend on \(\mu\). This reflects how the death rates modulate the carrying capacity but not the survival condition.

\medskip
These cases show a clear progression in how age structure constrains survival. Case 1 recovers the classical rule that net growth determines persistence. Case 2 imposes a linear penalty for early reproductive decline. Case 3 introduces a quadratic penalty that reflects the compounded challenge of surviving both a non-reproductive phase and background mortality. This progression quantifies the cost of delayed reproduction. In Cases 4--5, the survival condition depends only on division kernel shape, whereas death rate parameters affect the equilibrium bound \(P_c\), and determine how large the population can be once it survives.

Figure~\ref{fig:Fig1} and Table~\ref{tab:comparison-Pc} summarise these results. Numerical simulations confirm that steady-state populations respect their theoretical bounds \(P_c\). In Case 1, solutions match the bound exactly because the moment system closes in \(P\). Cases 2--3 track their respective bounds closely, while Cases 4--5 form bands beneath them. This banding stems from a fundamental difference in how mortality operates. In Cases 1--3, constant death rates affect growth at all densities. Both the survival condition and the equilibrium population density involve \(\mu\), which sets a unique equilibrium for a given \((\beta, \alpha, \mu)\). In Cases 4--5, density-dependent death vanishes at low densities. Survival is \(\mu\)-independent, depending only on \((\beta, \alpha)\), but the bound \(P_c\) is \(\mu\)-dependent. This gives one degree of freedom: varying \(\mu\) while holding \((\beta, \alpha)\) fixed produces one survival condition with different equilibria, hence the bands.

\begin{figure}[htb]
  \centering
  \includegraphics[width=\textwidth]{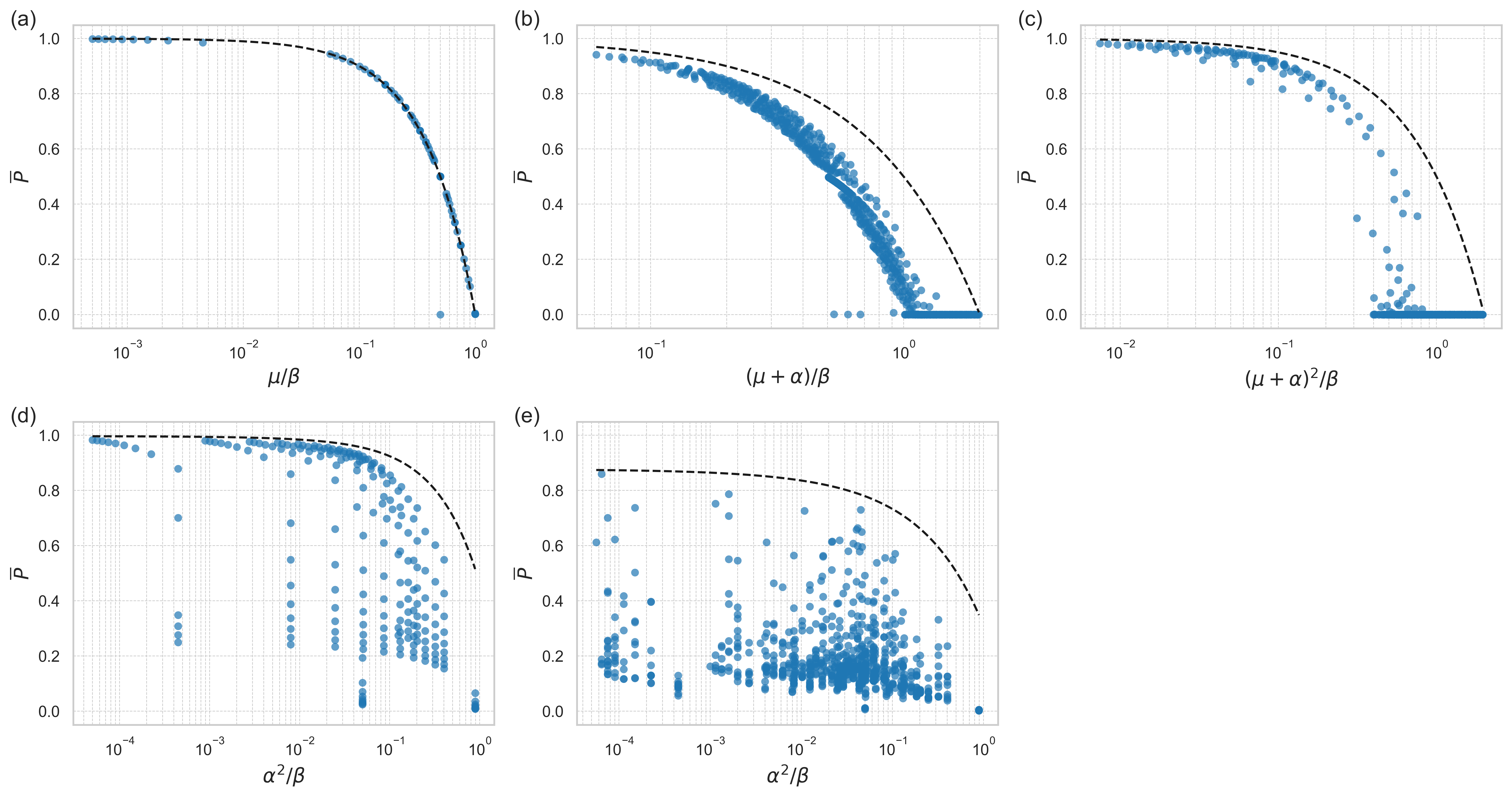}
  \caption{Numerical steady-state population sizes \( \bar{P} \) plotted against threshold conditions (Table~\ref{tab:comparison-Pc}) for each non-spatial model (Cases 1--5). Dashed lines indicate the analytical upper bounds \( \bar{P}_c \). Each panel corresponds to a different model case: (a) Case 1; (b) Case 2; (c) Case 3; (d) Case 4; and (e) Case 5.}
  \label{fig:Fig1}
\end{figure}

\begin{table}[htb]
\centering
\caption{Summary of non-spatial models: population bounds \(P_c\) (with \(\bar{P}\le P_c\)) and survival conditions admitting positive steady states.}
\label{tab:comparison-Pc}
\renewcommand{\arraystretch}{1.3}
\begin{tabular*}{\textwidth}{@{\extracolsep{\fill}}cll@{}}
\toprule
Case & Upper bound \( P_c \) & Survival condition \\
\midrule
1  & \( 1 - \mu/\beta \) & \( \beta > \mu \) \\
2 & \( 1 - \frac{\mu + \alpha}{2\beta} \) & \( 2\beta > \mu + \alpha \) \\
3 & \( 1 - \frac{(\mu + \alpha)^2}{2\beta} \) & \( 2\beta > (\mu + \alpha)^2 \) \\
4 &
\(
\frac{- (\alpha \mu + \beta) + \sqrt{(\alpha \mu + \beta)^2 - \mu^2 (\alpha^2 - 2\beta)} }{\mu^2}
\)
& \( 2\beta > \alpha^2 \) \\
5&
\(
\min\left(\frac{-(\alpha\mu+\beta) + \sqrt{(\alpha\mu+\beta)^2 - \mu^2(\alpha^2-2\beta)}}{\mu^2}, \frac{\beta}{\gamma+\beta}\right)
\)& \( 2\beta > \alpha^2,\, \mu > \gamma/\alpha e \) \\
\bottomrule
\end{tabular*}
\end{table}

\section{Steady-state age distributions}\label{sec:ssd}
As established in Section~\ref{sec:threshold}, the steady-state age distribution (SSAD) takes the form \(F(a)=F(0)\,S(a;\bar P)\), where \(S(a;\bar P)\) is the survival function and \(F(0)\) is determined by the renewal constraint \eqref{eq:Pbar-condition}. It describes the fraction of cells of age \(a\) in a population at equilibrium. Intuitively, it can be expressed as
\begin{equation}
    \begin{aligned}
        F(a) = & \text{ renewal factor } \times \mathbb{P}(\text{cell survives beyond age } a) \\
         & \times \mathbb{P}(\text{cell does not divide by age } a),
    \end{aligned}
\end{equation}
which captures the balance between survival and renewal across ages. A useful summary measure is the mean population age,
\begin{equation}
\bar{a}_\mathrm{pop} = \frac{\int_0^{\infty} a F(a)\D a}{\int_0^{\infty} F(a)\D a},
\end{equation}
which quantifies the average time since last division for cells in the population at steady state.

Table~\ref{tab:comparison-Fa} lists explicit formulae for \(F(a)\) across all five cases. The value of \(\bar P\) is found by solving the integral constraint \eqref{eq:Pbar-condition} numerically. Once \(\bar P\) is known, either from this computation or from data, these formulae give \(F(a)\) directly without integrating the full PDE specified in Equation~\eqref{eq:transport-nodiff}.

Figure~\ref{fig:Fig2} validates the analytical formulae. Panels (a--e) show numerical solutions (solid) match analytical predictions (dashed) across all cases. Panel (h) reveals a consistent pattern: mean division age \(\bar{a}_{\mathrm{div}}\) falls below mean population age \(\bar{a}_{\mathrm{pop}}\) in all cases. This occurs because divisions concentrate among cells in the reproductive window, while the population includes older, non-dividing survivors that persist in the tail and increase the mean age. The explicit SSADs therefore quantify how division timing shapes population structure. Age-independent rates (Cases 1--2) produce exponential-decay profiles because the division rates are constant across all ages. In contrast, gamma-type kernels (Cases 3–5) produce concave profiles near the origin. Newborn cells have a near-zero division probability, which causes them to accumulate into intermediate age classes before dividing.

\begin{table}[htb]
    \centering
    \caption{Summary of steady-state age distributions \( F(a) \). Cases 1-5 correspond to different division and death rate structures as defined in Table~\ref{tab:model-cases}.}
    \label{tab:comparison-Fa}
    \renewcommand{\arraystretch}{1.3}
    \begin{tabular*}{0.85\textwidth}{@{\extracolsep{\fill}}ccl@{}}
        \toprule
        Case & \( F(0) \) & \( F(a) = F(0)\,e^{-\int_0^a h(a')\,\D a'} \)\\ 
        \midrule
        1 & \( 2\beta(1 - \bar{P})\bar{P} \) & \( F(0)\, e^{-\mu a - \beta a (1 - \bar{P})} \) \\
        2 & \( 2\beta(1 - \bar{P})\bar{C} \) &  \( F(0)\,e^{-\mu a - \tfrac{\beta(1 - \bar{P})}{\alpha}(1 - e^{-\alpha a})} \) \\
        3 & \( 2\beta(1 - \bar{P})\bar{D} \) & \( F(0)\,e^{-\mu a - \tfrac{\beta(1 - \bar{P})}{\alpha^2}[1 - e^{-\alpha a}(1 + \alpha a)]} \) \\
        4 & \( 2\beta(1 - \bar{P})\bar{D} \) & \( F(0)\,e^{-\mu a \bar{P} - \tfrac{\beta(1 - \bar{P})}{\alpha^2}[1 - e^{-\alpha a}(1 + \alpha a)]} \) \\
        5 & \( 2\beta(1 - \bar{P})\bar{D} \) & \( F(0)\,e^{-\mu a\,\bar{P} + \tfrac{\gamma\,\bar{P}}{\alpha^2} [1 - e^{-\alpha a}(1 + \alpha a)] - \tfrac{\beta(1 - \bar{P})}{\alpha^2} [1 - e^{-\alpha a}(1 + \alpha a)]} \) \\
        \bottomrule
    \end{tabular*}
\end{table}

\begin{figure}[htb]
  \centering
  \includegraphics[width=\textwidth]{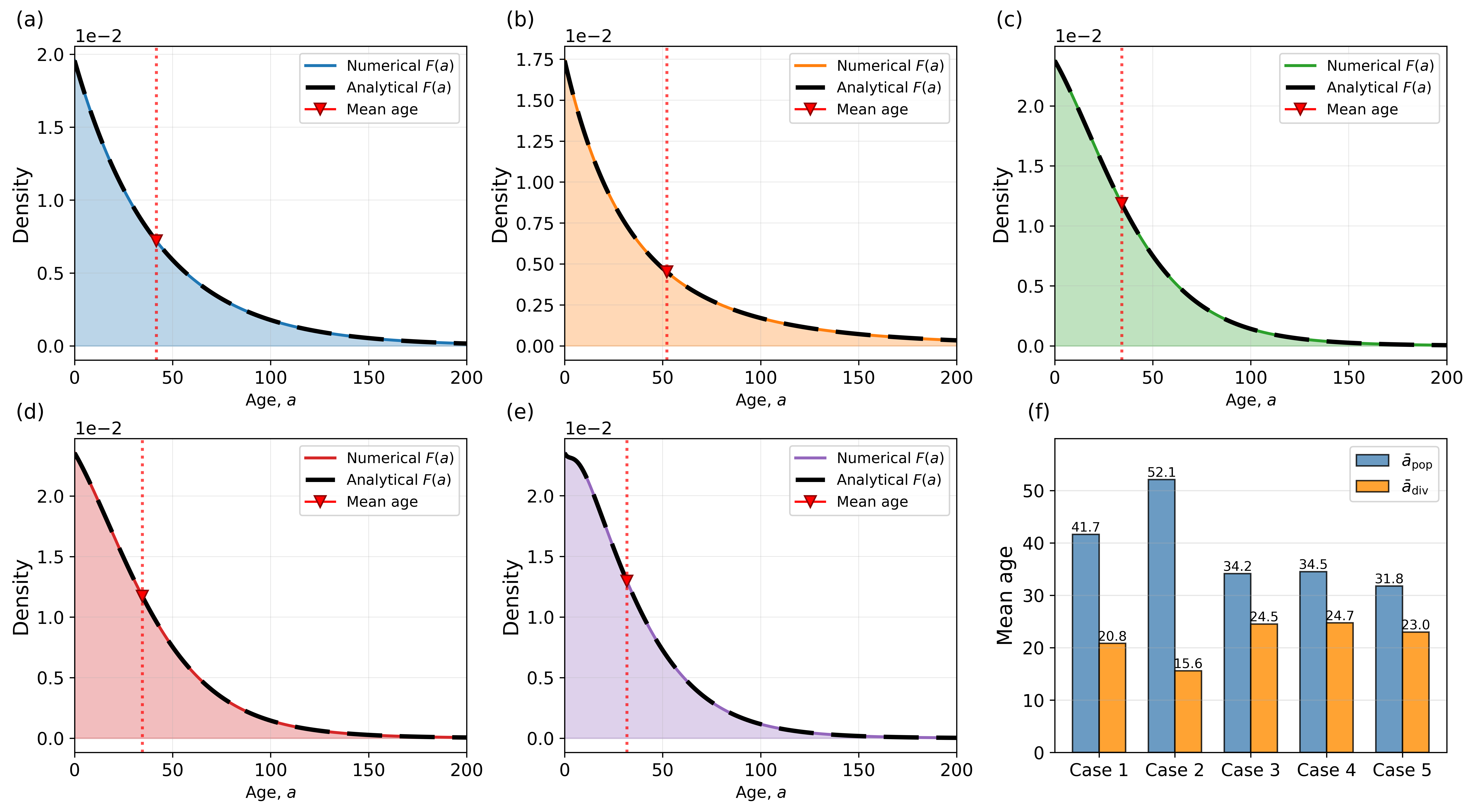}
  \caption{Steady-state age distributions across five model cases with different division and death structures. Panels (a)-(e) show steady-state distributions $F(a)$ for Cases 1-5: numerical solutions (coloured lines with shading) compared to analytical  (black dashed lines). Red triangles mark the mean population age for each case. Panel (f) compares mean population age $\bar{a}_{\mathrm{pop}}$ and mean division age $\bar{a}_{\mathrm{div}}$ across all cases.}
  \label{fig:Fig2}
\end{figure}

\section{Cell-cycle-time distributions}\label{sec:cctd}
The cell-cycle-time distribution (CCTD) describes the timing of a cell's first division in a population at steady-state. It complements the steady-state age distribution (SSAD) \(F(a)\), which describes population structure. Both derive from the survival probability \(S(a,\bar P)\) \eqref{eq:S-def}, but represent different biological quantities. The CCTD can be interpreted heuristically as
\begin{equation}
    \begin{aligned}
        f(a) = & \text{ division rate} \times \mathbb{P}(\text{cell survives beyond age } a) \\
         & \times \mathbb{P}(\text{cell does not divide by age } a),
    \end{aligned}
\end{equation}
which expresses the probability that a cell divides at age \(a\). We formalise this by defining the unnormalised cell division density as as \(\tilde{f}(a) = \beta(a, \bar{P}) S(a, \bar{P})\). This represents the density of cells undergoing their first division at age \(a\), given by the fraction of survivors \(S(a,\bar{P})\) (cells that have survived without dying or dividing until age \(a\)) multiplied by their instantaneous division rate \(\beta(a,\bar{P})\). The normalised CCTD is then
\begin{equation}
f(a) = \frac{\beta(a, \bar{P}) S(a, \bar{P})}{\int_0^\infty \beta(a', \bar{P}) S(a', \bar{P})\,\D a'}, \label{eq:CCTD}
\end{equation}
which ensures \(\int_0^\infty f(a) \, \D a= 1\). Using the SSAD \(F(a)=F(0) S(a,\bar P)\), the CCTD relates directly to the population structure:
\begin{equation}\label{eq:F-CCTD-relation}
f(a)=\frac{\beta(a,\bar P) F(a)}{\int_0^\infty \beta(a',\bar P) F(a')\,\D a'} = \frac{2 \beta(a,\bar P) F(a)}{F(0)},
\end{equation}
where the renewal condition \(F(0)=2\int_0^\infty \beta(a,\bar P) F(a)\,\D a\) gives the final form. Thus, \(f(a)\) is the division-weighted SSAD. The SSAD \(F(a)\) includes all cells, while the CCTD \(f(a)\) conditions on those that successfully divide.

A useful summary statistic is the mean division age defined by
\begin{equation}
    \bar{a}_\mathrm{div} = \int_0^\infty a f(a) \D a,
\end{equation}
which quantifies the expected timing of division events in the population.

The explicit forms of the CCTD for each case are given in Table~\ref{tab:comparison-CCTD}; Cases 1--2 produce exponential-like profiles, while Cases 3--5 yield gamma-type distributions with delayed peaks. These formulae provide a direct link to experiments. One can fit them to lineage-tracing data to either infer the underlying division kernel shape or, if the kernel form is known, to estimate specific parameters like the peak division age \((\alpha^{-1})\) and the maximum division rate \((\beta)\). This enables quantitative prediction of how division timing shapes population dynamics without conducting extensive numerical simulations. We first validate these expressions against numerical solutions (Fig.~\ref{fig:Fig3}) and then demonstrate their application to experimental data (Fig.~\ref{fig:Fig4}).

\begin{table}[htb]
    \centering
    \caption{Cell-cycle-time distributions (CCTDs) for each model case. The distribution is given by \( f(a) = \beta(a, \bar{P}) S(a, \bar{P}) / Z \), where \( Z \) is the normalization constant. The survival function is \( S(a, \bar{P}) = \exp\left(-\int_0^a h(a') da'\right) \) with the removal rate \( h(a) = \mu(a, \bar{P}) + \beta(a, \bar{P}) \).}
    \label{tab:comparison-CCTD}
    \renewcommand{\arraystretch}{1.2}
    \begin{tabular*}{0.75\textwidth}{@{\extracolsep{\fill}}cl@{}}
        \toprule
        Case & CCTD \( f(a) \) \\
        \midrule
        1 & 
        \( \beta(a,\bar{P})\, e^{ -(\mu + \beta(1 - \bar{P}))\,a } \) \\
        2 & 
        \( \beta(a,\bar{P})\, e^{ -\mu a - \frac{ \beta(1 - \bar{P}) }{ \alpha } [ 1 - e^{-\alpha a} ] } \) \\
        3 & 
        \( \beta(a,\bar{P})\, e^{ -\mu a - \frac{ \beta(1 - \bar{P}) }{ \alpha^2 } [ 1 - e^{-\alpha a}(1 + \alpha a) ] } \) \\
        4 & 
        \( \beta(a,\bar{P})\, e^{ -\mu a \bar{P} - \frac{ \beta(1 - \bar{P}) }{ \alpha^2 } [ 1 - e^{-\alpha a}(1 + \alpha a) ] } \) \\
        5 & 
        \( \beta(a,\bar{P})\, e^{ -\mu a \bar{P} + \frac{ \gamma \bar{P} }{ \alpha^2 } [ 1 - e^{-\alpha a}(1 + \alpha a) ] - \frac{ \beta(1 - \bar{P}) }{ \alpha^2 } [ 1 - e^{-\alpha a}(1 + \alpha a) ] } \) \\
        \bottomrule 
    \end{tabular*}
\end{table}
Figure~\ref{fig:Fig3} illustrates how the CCTD expressions translate into observable cell division-timing patterns. Cases 1--2 concentrate divisions at young ages. The exponential form in Case 1 (\(k=1.0\)) reflects memoryless division timing, where cells have a constant division rate regardless of age, consistent with classical birth-death models. In Case 2 (\(k=0.95\)), division potential decreases exponentially with age, producing a hypoexponential distribution that shifts divisions to even earlier ages. This captures fast-cycling regimes where many cells divide soon after birth. In contrast, Cases 3--5 incorporate gamma-type division kernels \(\beta(a,\bar P) \propto a e^{-\alpha a}\), which model a maturation delay and produce the bell-shaped CCTDs observed in lineage-tracing data \cite{smith_1973, leander_2014}. This captures the more realistic scenario where newly divided cells require time to prepare for the next division cycle. The peak location is set by \(a \approx 1/\alpha\), while the distribution height and width are determined by \(\beta\) and the death rate \(\mu\). All cases exhibit an exponential tail \(\sim e^{-\mu(a,\bar{P})}\). Panel (f) overlays all normalised cases, showing the ordering implied by their means: exponential-decay division rates (Case 2) yield the lowest mean division age (15.6), followed by constant rates (Case 1, 20.8), then gamma-type cases (3--5) with later peaks and higher mean ages (23--25).

We fitted the analytical CCTD for Case 3 to lineage data from cells under control (DMSO) and drug-perturbed conditions (Fig.~\ref{fig:Fig4}(a)--(c)). CHX (cycloheximide) is a protein synthesis inhibitor that lengthens the G1 phase, and erlotinib (ERL) is an EGFR inhibitor that extends all cycle phases. Case 3 was selected because its gamma-type division rate is the simplest form that produces the unimodal CCTD observed experimentally. The minimum cell-cycle time \(a_0\) was set to the 10th percentile of the data, fixing the left-hand support of the distribution. For each \(\alpha\) in a plausible range, we estimated \((\beta,\mu)\) by minimising a combined loss: the \(L^2\) difference between model and empirical cumulative distribution functions (CDFs), plus a weighting on tail survival probabilities to ensure accurate long-time behaviour. The survival condition \(2\beta > (\mu+\alpha)^2\) was enforced via quadratic penalty. The fitted parameters \((\beta, \alpha, \mu)\) reflect the expected mechanisms: CHX and erlotinib reduce the division rate (\(\beta\)) and increase the maturation timescale (\(1/\alpha\)) compared to the DMSO control. Full fitting details and code are provided in the Supplementary Materials.
\begin{figure}[htb]
  \centering
  \includegraphics[width=\textwidth]{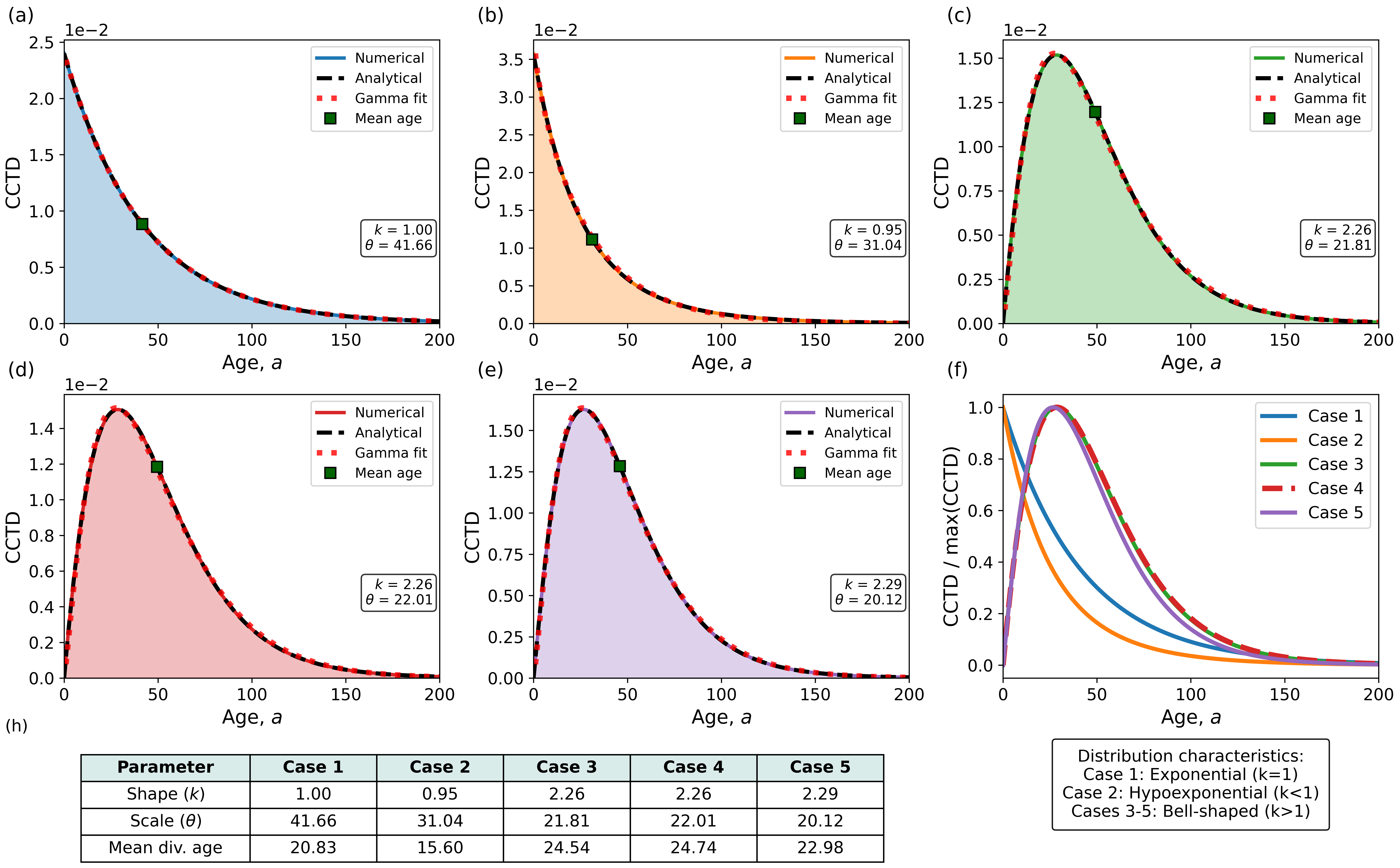}
  \caption{Cell–cycle-time distributions across five model cases. Panels (a--e) show numerical CCTDs (solid) with fitted gamma distributions (dashed) for Cases 1--5; the fitted shape \(k\) and scale \(\theta\) parameters are indicated. Squares mark mean division ages. Panel (f) overlays all cases after normalisation to isolate shape differences: exponential (Case 1, \(k=1\)), hypoexponential (Case 2, \(k<1\)), and bell–shaped (Cases 3--5, \(k>1\)). Panel (h) summarises the fitted gamma parameters and mean division ages.}
  \label{fig:Fig3}
\end{figure}

\begin{figure}[htb]
  \centering
  \includegraphics[width=\textwidth]{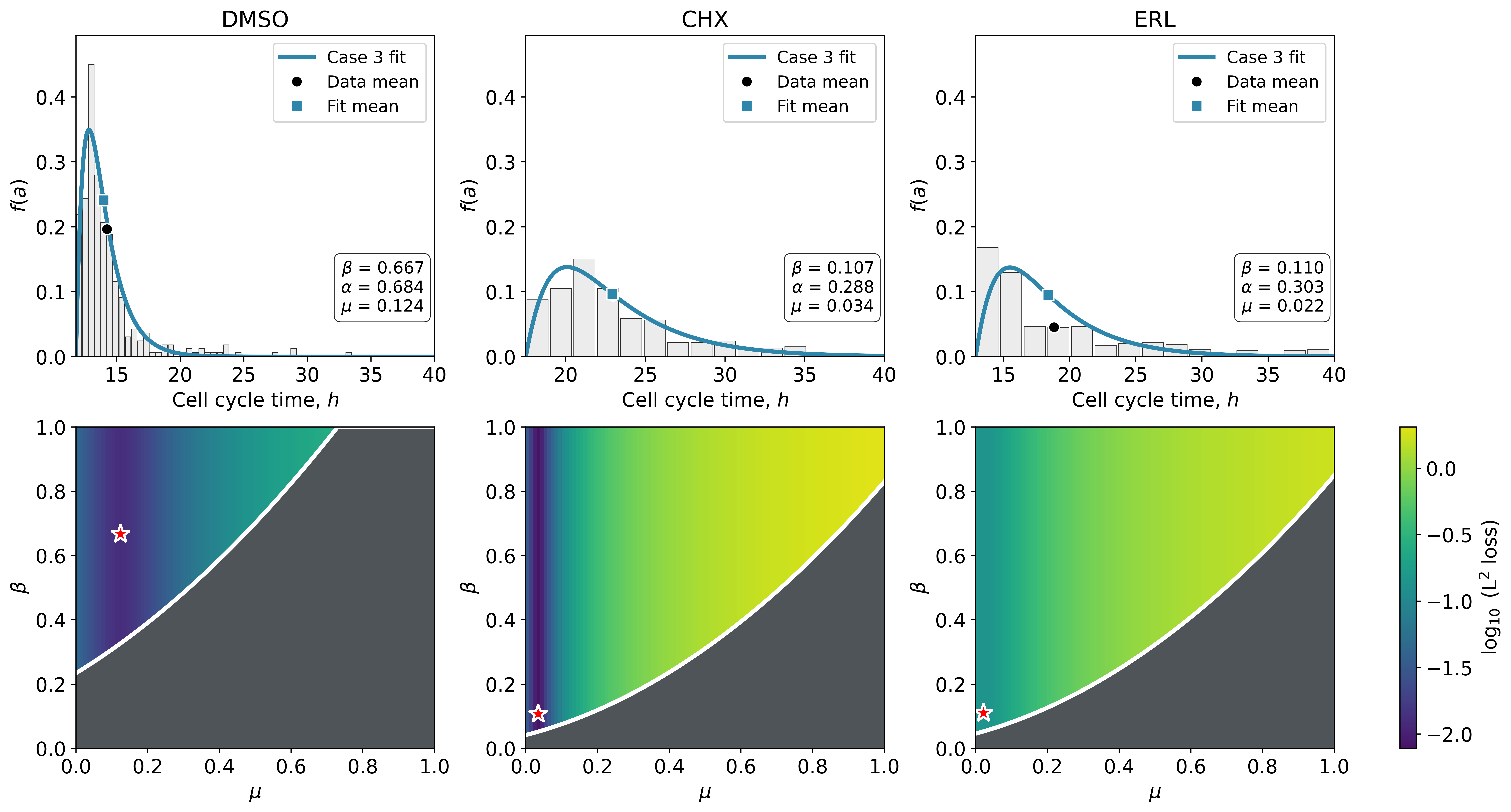}
  \caption{Experimental cell–cycle time distributions with Case~3 distributions and gamma fits. Panels (a--c) show CCTDs under DMSO (vehicle control), cycloheximide (CHX), and erlotinib (ERL). Data are plotted with the Case~3 analytical CCTD fit (solid) and a gamma distribution fit (dashed). Insets report the fitted \((\beta,\alpha,\mu)\); squares mark mean division age. Panels (d--f) show the corresponding loss landscapes in \((\mu,\beta)\) at the fitted \(\alpha\) on a shared colour scale: the white curve is the constraint boundary \(\beta=\frac{1}{2}(\mu+\alpha)^2\) (i.e., parameter condition for survival per Table~\ref{tab:comparison-Pc}); the red star marks the optimum; infeasible regions are shaded. Axes are \(\mu\in[0,1]\) and \(\beta\in[0,1]\). The infeasible region differs across panels because the feasibility boundary depends on the fitted \(\alpha\) for each dataset.}
  \label{fig:Fig4}
\end{figure}

\section{Travelling waves and invasion thresholds}\label{sec:travelling-waves}
We now examine travelling-wave solutions of the spatial (\(\kappa>0\)) models to assess how age structure modifies invasion speed. Classical \FKPP{} theory predicts a single wave speed determined by the net growth rate. In the context of age-structured models, two distinct linear predictions emerge. The first, \(c_{\mathrm{lin}}\), arises from the moment-system approximation and coincides with the \FKPP{} prediction. The second, \(c_{\min}\), obtained from the age-structured formulation, is systematically smaller. This discrepancy arises because the moment-system linearisation assumes uniform decay across all ages at the wave front, whereas the age-structured model preserves differential decay where younger cells dominate the front, and their lower division potential slows invasion. Moreover, the condition for a positive wave speed exactly matches the non-spatial survival condition, linking persistence and invasion within a unified framework.
\subsection{Moment-system linearisation and the speed \texorpdfstring{\(c_{\mathrm{lin}}\)}{c\_lin}}
We seek solutions of the form
\begingroup
\setlength{\abovedisplayskip}{4pt}
\setlength{\belowdisplayskip}{4pt}
\begin{equation}\label{eq:TW-ansatz}
u(a,x,t) = \phi(a,z),
\end{equation}
\endgroup
with \(z = x - ct\) and \(c > 0\), representing a front propagating at constant speed \(c\). We apply the travelling wave ansatz~\eqref{eq:TW-ansatz} to the reduced integro--differential system \eqref{eq:P-evol}--\eqref{eq:I-defs}, expressing the population moments as functions of \(z\),
\begin{align}
U(z) &= \int_0^\infty \phi(a,z) \,\D a, \quad 
V(z;\alpha) = \int_0^\infty e^{-\alpha a}\phi(a,z) \,\D a, \label{eq:UVW-def}\\
W(z;\alpha) &= \int_0^\infty a e^{-\alpha a}\phi(a,z) \,\D a, \quad
\tilde{I}_k(z;\alpha) = \int_0^\infty a^k e^{-2\alpha a} \phi(a,z) \,\D a \quad \text{for } k = 0, 1, 2. 
\end{align}
The parameter \(\alpha\) here represents the rate at which fertility declines with age in the division kernel. We linearise the resulting system about the extinction state where all moments vanish, i.e. \(\;U(z)=V(z;\alpha)=W(z;\alpha)=\tilde I_k(z;\alpha)=0\;\) for \(k=0,1,2\). This is mathematically tractable and makes the specific biological assumption that all age classes vanish uniformly at the wave front. For each model case, this procedure yields an explicit expression for the wave speed, which we denote \(c_{\mathrm{lin}}\). For age-independent rates (Case 1), this gives the classical \FKPP{} speed, \(c_{\mathrm{lin}} = 2\sqrt{\kappa r}\) with \(r=\beta - \mu\). For models with age structure, \(c_{\mathrm{lin}}\) takes a more complex form but maintains the general structure \(2\sqrt{\kappa~r_{\mathrm{eff}}}\), where \(r_{\mathrm{eff}}\) is an effective low-density growth rate.

Table~\ref{tab:wave-speed-summary} summarises the wave speeds \(c_{\mathrm{lin}}\) for Cases 1--5. Numerical simulations (Fig.~\ref{fig:Fig5}) confirm that \(c_{\mathrm{lin}}\) matches the observed invasion speed when division and death rates are age-independent (Case 1). For age-dependent rates, however, \(c_{\mathrm{lin}}\) overestimates the observed speed because the age distribution at the wave front is not uniform. We derive the exact minimal speed \(c_{\min}\) in the next section and show \(c_{\min} \leq c_{\mathrm{lin}}\) in all cases.

\subsection{Age-structured analysis and minimal wave speeds}
We apply the ansatz~\eqref{eq:TW-ansatz} to the governing PDE \eqref{eq:transport},
\begin{equation}\label{eq:linearised-pde0}
-c\phi_z + \phi_a = \kappa \phi_{zz} - [\mu(a,U(z)) + \beta(a,U(z))]\phi(a,z).
\end{equation}
At the leading edge, where \(U(z) \to 0\), the linearised system becomes
\begin{equation}\label{eq:linearised-pde}
-c\phi_z + \phi_a = \kappa \phi_{zz} - [\mu(a,0) + \beta(a,0)]\phi(a,z),
\end{equation}
with boundary condition
\begin{equation}\label{eq:linearised-bc}
\phi(0,z) = 2\int_0^\infty \beta(a,0)\phi(a,z) \,\D a.
\end{equation}
We seek solutions of the form \(\phi(a,z) = A(a)e^{-\lambda z}\), where \(\lambda > 0\) determines the spatial decay rate. Substituting into \eqref{eq:linearised-pde} yields the age profile equation
\begin{equation}
A'(a) + \left[c\lambda - \kappa\lambda^2 + \mu(a,0) + \beta(a,0)\right]A(a) = 0,
\end{equation}
with solution
\begin{equation}\label{eq:age-profile}
A(a) = A(0)\exp\left[-\int_0^a \left(c\lambda - \kappa\lambda^2 + \mu(s,0) + \beta(s,0)\right) \,\D s\right].
\end{equation}
Substituting \eqref{eq:age-profile} into \eqref{eq:linearised-bc} gives the dispersion relation
\begin{equation}\label{eq:general-dispersion}
1 = 2\int_0^\infty \beta(a,0) \exp\left[-\int_0^a (c\lambda - \kappa\lambda^2 + \mu(s,0) + \beta(s,0)) \,\D s\right] \,\D a.
\end{equation}
The minimal wave speed \(c_{\min}\) is the smallest \(c > 0\) for which there exists \(\lambda > 0\) satisfying \eqref{eq:general-dispersion}. For tractable rate forms, \eqref{eq:general-dispersion} yields \(c_{\min}\) analytically. Otherwise, it reduces to a root-finding problem that is straightforward to compute numerically.

\subsection{Minimal wave speeds across model cases} \label{sec:TW-examples} 
We now apply both linearisation approaches to Cases 1 and 5, deriving \(c_{\min}\) explicitly and showing that \(c_{\min} = c_{\mathrm{lin}}\) for Case 1 but \(c_{\min} < c_{\mathrm{lin}}\) for Case 5.

\paragraph{Case 1: Age-homogeneous dynamics}
Consider the system \eqref{eq:transport}--\eqref{eq:renewal} with
\begin{equation}
\mu(a,P) = \mu, \qquad \beta(a,P) = \beta(1 - P).
\end{equation}
This case serves as a benchmark where age structure decouples from invasion dynamics. We recall that the non-spatial survival condition is \(\beta>\mu\).

Integrating the PDE \eqref{eq:transport} with respect to \(a\) and using the boundary condition \( u(0,x,t) = 2\beta(1 - P)P \) gives (Lemma~\ref{lem:class1})
\begin{equation}
\partial_t P = \kappa \Delta P + r P\left(1 - \frac{P}{K}\right). \label{eq:FKPP-form}
\end{equation}
where \(r=\beta-\mu\) and \(K=(\beta-\mu)/\beta\). Applying the travelling-wave ansatz \(P(x,t) = U(z)\), \(z=x-ct\), and linearising about \(U=0\) yields
\begin{equation}
\kappa\lambda^2 - c\lambda + (\beta - \mu) = 0. \label{eq:FKPP-dispersion}
\end{equation}
This dispersion relation is identical to that obtained from the general formula \eqref{eq:general-dispersion} with \(\mu(a,0) = \mu\) and \(\beta(a,0) = \beta\), confirming that the moment-system and age-structured linearisations agree. Hence
\begin{equation}
c_{\min} = c_\mathrm{lin} = 2\sqrt{\kappa(\beta-\mu)}. \label{eq:cmin-1}
\end{equation}
If \(\beta > \mu\) the population invades; otherwise the population goes extinct. This matches the non-spatial survival condition. Invasion speed depends solely on the net growth rate \(\beta - \mu\), not age structure, linking persistence to spatial spread.

\medskip
\paragraph{Case 1b: Density-dependent death rate}
Consider a modification: \(\mu(a,P)=\mu P\) and \(\beta(a,P)=\beta(1-P)\). Lemma~\ref{lem:class1} applies, giving
\begin{equation}\label{eq:Case1b-PDE}
\partial_t P=\kappa \Delta P+\beta\,P-(\beta+\mu)P^2.
\end{equation}
This is again a \FKPP{}-type equation with \(r=\beta\) and \(K=\beta/(\beta+\mu)\). Since the division and death rates remain age-independent, we have \(c_{\min} = c_{\mathrm{lin}}\), with
\begin{equation}\label{eq:cmin-Case1b}
c_{\min} = 2\sqrt{\kappa\,\beta}.
\end{equation}
The key difference from Case 1 lies in how the death rate affects invasion. In Case 1, the death rate is independent of the density (\(\mu(a,P) = \mu\)), reducing the net growth rate to \(r = \beta - \mu\). In Case 1b, the death rate is density-dependent (\(\mu(a,P) = \mu P\)) and vanishes at the wave front where \(P \to 0\). The invasion speed is therefore governed by the division rate at low density, while the death rate regulates the carrying capacity behind the front.

\medskip
\paragraph{Case 5: Maturation delay with density-dependent death rate} Consider
\begin{equation}
\mu(a,P) = (\mu - \gamma a e^{-\alpha a})\,P, \qquad 
\beta(a,P) = \beta a e^{-\alpha a}(1 - P),
\end{equation}
where \( \mu > \gamma/(\alpha e) \) ensures the death rate remains positive, and the non-spatial threshold \(2\beta > \alpha^2\) guarantees population persistence.

Using the reduction framework of Section~\ref{sec:reduction} and the ansatz~\eqref{eq:TW-ansatz}, the moment system in travelling-wave coordinates becomes
\begin{equation}
\begin{aligned}
-c U'&=\kappa U''-\mu U^{2}+ \gamma  U W + \beta (1-U) W,\\
-c V'&=\kappa V''-(\mu U+\alpha) V+ \gamma U \tilde{I_1} + \beta (1-U)\,\bigl(2W-\tilde{I}_1\bigr),\\
-c W'&=\kappa W''-(\mu U+\alpha) W+V+ \gamma U \tilde{I_2} -\beta (1-U) \tilde{I}_2,
\end{aligned}
\end{equation}
where the \(\tilde{I}_k\) are defined as in \eqref{eq:UVW-def}. Linearising about \((U, V, W) = (0,0,0)\), where higher moments vanish because \(\phi(a,z) \to 0\), yields
\begin{equation}
c_{\mathrm{lin}} = 2\sqrt{\kappa\left(\sqrt{2\beta} - \alpha\right)}.
\end{equation}
If\(2\beta > \alpha^2\), which matches the condition for population persistence, the population invades; otherwise the population dies out. Unlike Case 1, however, \(c_{\mathrm{lin}}\) provides only an upper bound. To obtain the exact minimal speed \(c_{\min}\), we solve the dispersion relation \eqref{eq:general-dispersion}. At the wave front where \(P \to 0\), \(\mu(a,0) = 0\) and \(\beta(a,0) = \beta a e^{-\alpha a}\), giving
\begin{equation}
1 = 2\beta \int_0^\infty a e^{-\alpha a} \exp\left[-(c\lambda - \kappa\lambda^2)a - \frac{\beta}{\alpha^2}\left(1 - e^{-\alpha a}(1 + \alpha a)\right)\right] \D a. \label{eq:dispersion-case5}
\end{equation}
The minimal speed \(c_{\min}\) is obtained by minimising \(c(\lambda)\) over \(\lambda > 0\) satisfying \eqref{eq:dispersion-case5}. Numerical solution (Fig.~\ref{fig:Fig5}) confirms \(c_{\min} < c_\mathrm{lin}\).

\medskip
In summary, the upper bound \(c_{\mathrm{lin}}\) arises from linearising the moment system at the extinction state, assuming uniform decay across all age classes. The exact minimal speed \(c_{\min}\) accounts for the age-structured profile at the leading edge and satisfies \(c_{\min} \leq c_{\mathrm{lin}}\), with equality only when division and death rates are age-independent. The invasion conditions are parameter constraints derived by requiring \(c_{\mathrm{lin}} > 0\) and coincide with the non-spatial survival thresholds for all cases (Table~\ref{tab:comparison-Pc} and Table~\ref{tab:wave-speed-summary}). When invasion conditions hold, travelling waves exist with positive minimal speed; when they fail, populations cannot invade regardless of initial spatial configuration, as the population dies out locally before spatial spread can occur.

\begin{table}[htb]
    \centering
    \caption{Summary of invasion speed bounds and conditions for the spatial models. The upper bound \(c_\mathrm{lin} \) comes from moment-system linearisation, and the minimal speed \(c_{\min}\) satisfies \(c_{\min} \leq c_\mathrm{lin} \). Invasion conditions give the parameter regime where travelling waves exist (equivalently, where \(c_{\mathrm{lin}} > 0\)).}
    \label{tab:wave-speed-summary}
    \renewcommand{\arraystretch}{1.3}
    \begin{tabular*}{\textwidth}{@{\extracolsep{\fill}}clll@{}}
        \toprule
        Spatial case & Invasion speed bound \( c_\mathrm{lin} \) & Invasion condition \\
        \midrule
        1  &  \(2\sqrt{\kappa(\beta-\mu)}\) & \(\beta > \mu\) \\
        1b &  \(2\sqrt{\kappa\beta}\) & \(\beta > 0\) \\
        2  &  \(2\sqrt{\kappa(2\beta-\mu-\alpha)}\) & \(2\beta > \mu+\alpha\) \\
        3  & \(2\sqrt{\kappa(\sqrt{2\beta}-\mu-\alpha)}\) & \(2\beta > (\mu+\alpha)^2\) \\
        4  &  \(2\sqrt{\kappa(\sqrt{2\beta} - \alpha)}\) & \(2\beta > \alpha^2\) \\
        5  &  \(2\sqrt{\kappa(\sqrt{2\beta} - \alpha)}\) & \(2\beta > \alpha^2\) \\
        \bottomrule
    \end{tabular*}
\end{table}
Numerical solutions of the governing PDE~\eqref{eq:transport} converge to travelling waves with speeds matching theoretical predictions (Fig.~\ref{fig:Fig5}). For age-independent rates (Case 1), \(c_{\mathrm{min}} = c_{\mathrm{lin}}\). For age-dependent rates (Cases 3, 5), \(c_{\mathrm{min}} < c_{\mathrm{lin}}\). Figure~\ref{fig:Fig6} reveals the mechanism. The top row shows where cell division occurs spatially. In Case 1, division concentrates at the leading edge. In Cases 3 and 5, cell division concentrates slightly behind the leading edge. 
Consequently, peak division occurs not at the leading edge but at positions where cells have aged sufficiently to reach peak division rates. The bottom row shows death loss and division loss. In Case 1, high division rates at the front drive rapid invasion. In Cases 3 and 5, lower division rates at the front slow the wave.

The inequality \(c_{\min} \leq c_{\mathrm{lin}}\) quantifies how age structure affects invasion speed. The bound \(c_{\mathrm{lin}}\) assumes all ages contribute equally at the wave front and yields a tractable expression that depends only on division and death rates at low density. The exact minimal speed \(c_{\min}\) accounts for the age distribution at the leading edge but requires solving a dispersion relation numerically. Computing both quantities provides a diagnostic: if \(c_{\min} \approx c_{\mathrm{lin}}\), age structure is negligible; if \(c_{\min} \ll c_{\mathrm{lin}}\), age-structure substantially affects invasion speed.

\begin{figure}[htb]
  \centering
  \includegraphics[width=\textwidth]{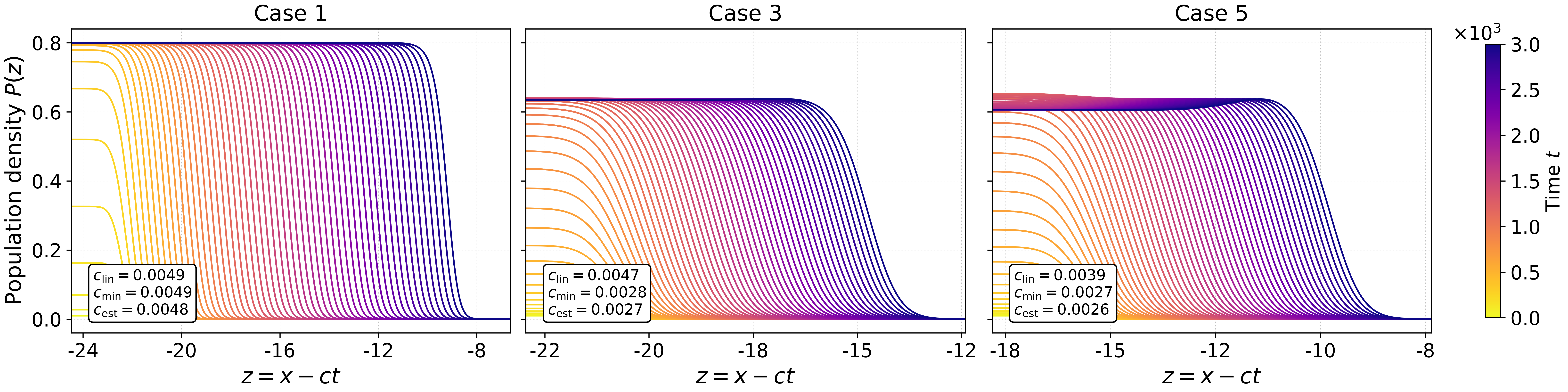}
  \caption{Travelling wave profiles for different model cases. Each panel shows the evolution of the population density \( P(z) \) as a function of the travelling wave coordinate \( z = x - ct \). The numerically estimated wave speed \( c_\mathrm{est} \) approximates the theoretical minimal wave speed \( c_{\min} \). The wave speed predicted by the linear theory \( c_\mathrm{lin} \) is also provided. Parameters: \(\alpha = 0.01\), \(\mu = 0.005\), \(\kappa = 3 \times 10^{-4}\); \(\beta\) satisfies \(\beta = r \times (\text{invasion condition})\) with \(r=5\) and invasion conditions from Table~\ref{tab:comparison-Pc}, which places each case at the same relative distance above its extinction threshold. Case 5 additionally requires \(\gamma < \mu \alpha e\) to ensure non-negative death rates. We set \(\gamma = 1 \times 10^{-4}\).}
  \label{fig:Fig5}
\end{figure}

\begin{figure}[htb]
  \centering
  \includegraphics[width=\textwidth]{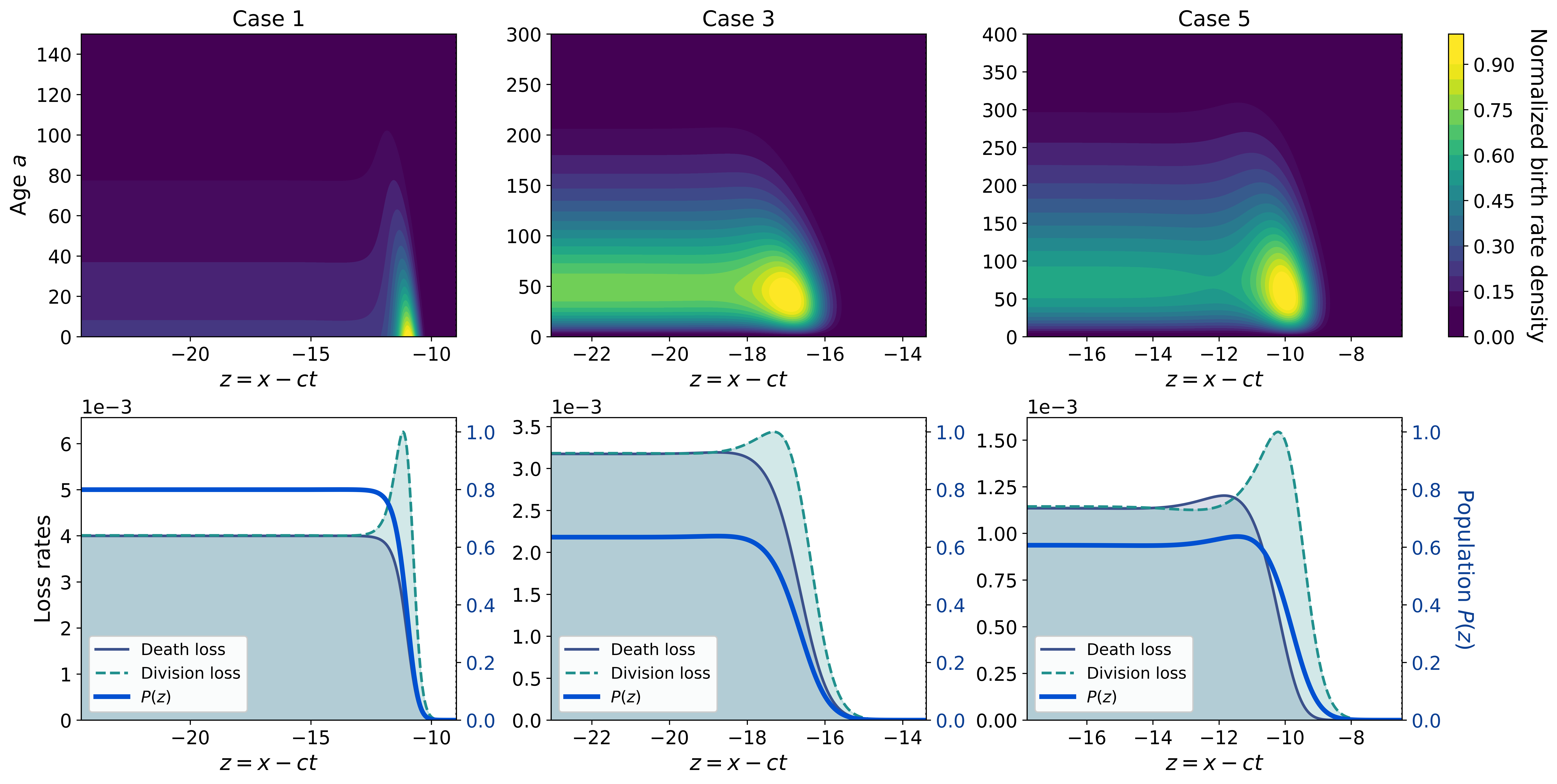}
  \caption{Population structure and birth-death dynamics for different division kernels. Top row: normalised division rate density \(\beta(a,z)u(a,z)\) in the travelling wave frame, showing the spatial distribution of new cells across age classes. Bottom row: death and division loss rates (left axis) alongside total population density \(P(z)\) (right axis), illustrating the balance between losses (cell death and division) and population renewal (birth influx via daughter cells). Parameters are as in Figure~\ref{fig:Fig5}.}
  \label{fig:Fig6}
\end{figure}

\section{Discussion}
We developed an analytical framework for age-structured cell populations that separates division and death processes explicitly and reduces the governing transport-renewal PDE to a tractable integro--differential system. Using this reduction, we derived explicit expressions for the SSAD, the CCTD, and the invasion speeds, and achieved analytical tractability previously limited to homogeneous populations. A fundamental duality emerges: the condition for population persistence is identical to the condition for spatial invasion. If this condition fails, the population does not generate a front and declines to extinction. This correspondence holds across all model cases. Moreover, when the division and death rates are age-independent, the system reduces to a logistic or \FKPP{} equation; when the rates depend explicitly on age, the same framework yields rigorous bounds, including the survival threshold \(P_c\), the wave-speed upper bound \(c_\mathrm{lin}\), and the exact minimal speed \(c_{\min}\). The difference \(c_\mathrm{lin} - c_{\min}\) quantifies how the assumption of a uniform division rate across the cell population leads to an overestimation of the true invasion speed. In modelling applications such as wound healing or tumour invasion, models that neglect age-dependent division rates systematically overestimate the rate of spatial advance. 

\textbf{Age structure changes the rules of population persistence.} 
We first examined the growth dynamics of populations without spatial structure. The framework yields explicit survival conditions (Table~\ref{tab:comparison-Pc}) confirmed by numerical simulations (Fig.~\ref{fig:Fig1}). These conditions follow a systematic pattern. For age-independent dynamics (Case 1), population persistence requires only that the division rate exceeds the death rate (\(\beta > \mu\)). When the division rate varies with age, stricter criteria emerge. An early-division bias, where division rate decays exponentially at rate \(\alpha\) (Case 2), requires \(2\beta > \mu + \alpha\). A maturation delay, where division rate peaks at age \(\sim 1/\alpha\) (Case 3), requires \(2\beta > (\mu + \alpha)^2\). Cases 4--5 represent a qualitatively different regime. When the death rate is density-dependent (\(\mu(a,P) = \mu P\)) with a maturation delay in the division rate, the survival condition simplifies to \(2\beta > \alpha^2\), which depends solely on the division rate. In these cases the division mechanism sets the survival condition, whereas the death rate determines only the equilibrium population size (see \(P_c\) in Table~~\ref{tab:comparison-Pc}).

\textbf{Age structure at invasion fronts slows spatial spread.}
Invasion speed depends on the division rate at low density. Case 1 illustrates the classical scenario where division rates are age-independent, so newborn cells at the leading edge divide at the same rate as older cells, and the wave propagates at the \FKPP{} speed (\(c_{\min} = c_{\mathrm{lin}}\)). In Cases 3 and 5, the leading edge is dominated by newborn cells with low division rates, whereas peak division rate occurs behind the front. This maturation delay explains why \(c_{\min} < c_{\mathrm{lin}}\) in these cases. Cell populations that can divide soon after division (Cases 1--2) invade faster than those requiring maturation before division (Cases 3--5). Far behind the leading edge in all cases, the population reaches equilibrium where the division and death rates balance (Figure~\ref{fig:Fig6}, bottom row), but this region does not drive front propagation.

\textbf{Explicit distributions link mechanism to data.}
The framework yields explicit expressions for the SSAD and the CCTD (Tables~\ref{tab:comparison-Fa},~\ref{tab:comparison-CCTD}). These distributions follow from the same division and death rates but emphasise distinct features of the population. The SSAD describes the age profile of the entire population; the CCTD describes the timing of first divisions. Age-independent division rates produce exponential CCTDs. Gamma-type division rates produce bell-shaped CCTDs that peak at intermediate ages, matching the maturation delays observed in lineage data \cite{smith_1973,leander_2014}. The mean division age and the mean population age can be obtained from the CCTD and the SSAD, respectively. The mean division age systematically lies below the mean population age (Fig.~\ref{fig:Fig3}~h) because cell divisions concentrate among younger cells, whereas the SSAD retains older cells that have neither died nor divided.

Experimental techniques such as fluorescent cell-cycle reporters \cite{sakaue_2008} identify the current phase of an individual cell (G1/S/G2/M) but not the duration of that phase \cite{eastman_2020}. Lineage tracing quantifies cell-cycle time yet requires long-term tracking. The analytical CCTD provides a direct route from data to population-level parameters \((\beta, \alpha, \mu)\), as illustrated in Fig.~\ref{fig:Fig5} for drug treatments where fitted values fall near the theoretical survival thresholds. The same parameters determine the SSAD and predict persistence and invasion speed without live-cell imaging or simulation of the full PDE system. Earlier work established equivalence between multi-stage stochastic models and age-structured PDEs \cite{yates_2017, kynaston_2022}. Our explicit formulae extend that foundation and permit direct parameter inference from standard experimental measurements.

\medskip
Our framework aggregates the cell-cycle phases into a single age variable, which provides analytical tractability but has limitations. It cannot resolve phase-specific regulation (e.g., G1 versus S/G2/M delays) or model heritable differences in division timing---the `two-clock' problem in which biological and chronological age diverge \cite{ocal_2025}. Cells that cycle rapidly may be biologically older despite less chronological time elapsed, and this disparity affects population dynamics in ways our framework does not yet capture. Future work will extend the framework in two directions. First, multi-stage models could test whether the persistence and invasion principles persist under finer cell-cycle resolution. Second, allowing the rate of biological ageing to vary could reveal how internal clocks, distinct from chronological time, modulate population dynamics.

\medskip
\noindent \textbf{Declarations}

\medskip
\noindent \textbf{Authorship and contributorship.}
All authors have made substantial intellectual contributions to the study conception, execution, and design of the work. All authors have read and approved the final manuscript. Contributions were as follows: Conceptualisation: St\'ephanie M. C. Abo, Ruth E. Baker; Methodology: St\'ephanie M. C. Abo, Ruth E. Baker; Software: St\'ephanie M. C. Abo; Formal analysis and investigation: St\'ephanie M. C. Abo; Writing -- original draft preparation: St\'ephanie M. C. Abo; Writing -- review and editing: St\'ephanie M. C. Abo, Ruth E. Baker; Supervision: Ruth E. Baker; Funding acquisition: Ruth E. Baker. 

\medskip
\noindent \textbf{Funding.}
Ruth E. Baker is supported by a grant from the Simons Foundation (MP-SIP-00001828).

\medskip
\noindent \textbf{Acknowledgements.}
We thank Christian A. Yates for sharing unpublished cell-cycle data originally collected by Richard L. Mort and Matthew J. Ford; Rachel N. Leander for providing access to the DMSO, CHX, and ERL datasets available at \href{https://github.com/rnleander/DDT_cell_cycle}{https://github.com/rnleander/DDT\_cell\_cycle}. We also thank Mike B. Giles for helpful discussions on the formulation and implementation of the numerical scheme.

\medskip
\noindent For the purpose of open access, the authors have applied a CC BY public copyright licence to any author accepted manuscript arising from this submission.

\medskip
\noindent \textbf{Conflicts of interest.}
The authors declare there are no conflicts of interest.

\medskip
\noindent \textbf{Consent for publication} All the authors approved the final version of the manuscript.

\medskip
\noindent\textbf{Code availability.} 
All code used for the simulations and figure generation is publicly available on GitHub at 
\href{https://github.com/Stephanie-Abo/age-structured-invasion-dynamics.git}{https://github.com/Stephanie-Abo/age-structured-invasion-dynamics.git}. 

\bibliographystyle{unsrt}
\bibliography{references}

\clearpage
\includepdf[pages=-,pagecommand={},scale=1]{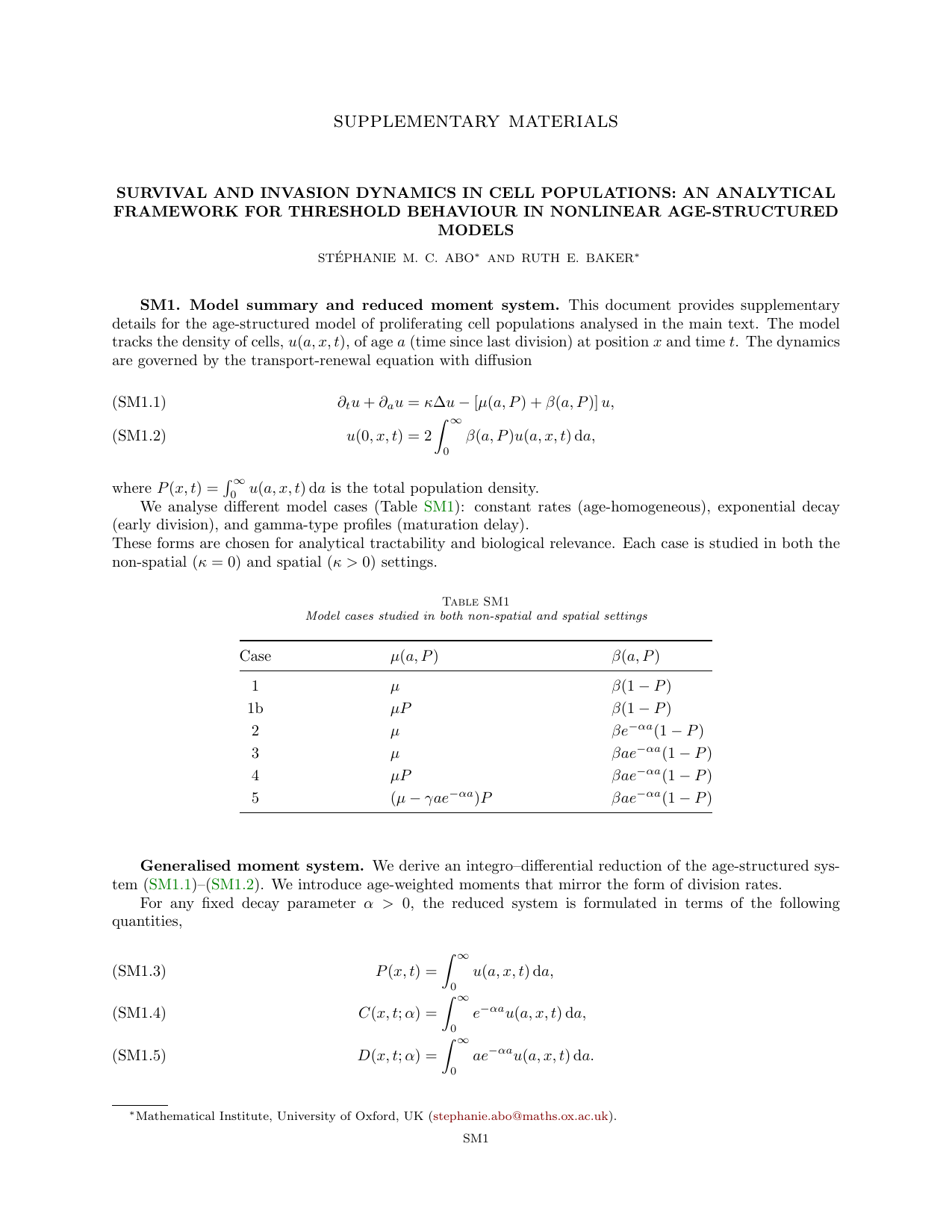}

\end{document}